\newcommand{\tr}{\text{ tr}}
\newtheorem{theorem}{Theorem}[section]
\newtheorem{corollary}{Corollary}[section]
\newtheorem{definition}{Definition}[section]
\newtheorem{lemma}{Lemma}[section]
\newtheorem{proposition}{Proposition}[section]
\newtheorem{remark}{Remark}[section]
\title{\LARGE \bf
Probabilistic Robustness Analysis of Stochastic Jump Linear Systems
}
\author{Kooktae Lee, Abhishek Halder, and Raktim Bhattacharya
\thanks{Kooktae Lee, Abhishek Halder, and Raktim Bhattacharya are with the Department of Aerospace Engineering, Texas A\&M University,
        College Station, TX 77843-3141, USA, {\tt\scriptsize \{animodor,ahalder,raktim\}@tamu.edu.}}%
}
\begin{document}
\maketitle
\thispagestyle{empty}
\pagestyle{empty}

\begin{abstract}
In this paper, we propose a new method to measure the probabilistic robustness of stochastic jump linear system with respect to both the initial state uncertainties and the randomness in switching. Wasserstein distance which defines a metric on the manifold of probability density functions is used as tool for the performance and the stability measures. Starting with Gaussian distribution to represent the initial state uncertainties, the probability density function of the system state evolves into mixture of Gaussian, where the number of Gaussian components grows exponentially. To cope with computational complexity caused by mixture of Gaussian, we prove that there exists an alternative probability density function that preserves exact information in the Wasserstein level. The usefulness and the efficiency of the proposed methods are demonstrated by example.
\end{abstract}

\section{Introduction}
Over few decades, broad investigations have been achieved for jump systems. A jump system is defined as a switching between a family of subsystem dynamics. Specifically, a jump linear system for which all subsystem dynamics are linear is subcategory of switched systems. According to a switching logic that orchestrates jump between different sublinear dynamics, jump linear systems further branch out into two subfields; deterministic and stochastic jump linear systems. 

Deterministic jump linear systems have been applied to power systems, manufacturing systems, aerospace systems, etc(\cite{liberzon2003switching,lin2009stability,cassandras2001optimal,minto1991new}).
In this system, a deterministic switching law governs the jump sequences which are constituted by deterministic process. The advantage of utilization for such deterministic jump systems stems from not only plant stabilization but also system performance, adaptive control, and resource-constrained scheduling. Since the stability is one of the most important issues on jump systems, a variety of results with respect to the stability are established and recent literature can be found in \cite{lin2009stability}. According to \cite{lin2009stability}, the sufficient condition for the stability of deterministic jump linear systems can be guaranteed by solving certain linear matrix inequalities(LMIs). Also, necessary and sufficient conditions for the stability are shown via a finite tuple, satisfying a certain condition.

On the other hands, stochastic jump linear systems where the switching law is fractional numbers representing jump probability are commonly used, such as abrupt environmental disturbances, component failures, changes in subsystems interconnections, abrupt changes in the operation point, random communication delays in control, etc. 
Unlike the deterministic jump linear systems, stochastic jump linear systems have several difficulties on the analysis. For instance, due to the randomness of switching laws we get different state trajectories for multiple run even with same initial conditions and switching probabilities. Moreover, the states of the stochastic jump linear system become random variables with a probability density function even if deterministic states are given initially. The spatio-temporal evolution of the joint probability density function for system states depends on the stochastic switching laws. Therefore, the analysis of the stability for stochastic jump linear systems has been investigated in terms of mean square sense with several different notions(asymptotically / exponentially / stochastically stable)\cite{feng1992stochastic}. A Markov jump linear system\cite{do2005discrete} where the switching rule is governed by Markovian process is one example of such stochastic jump linear systems. Because of the easiness to formulate the randomness in communication delay or packet loss in networked systems, Markov process describing the stochastic jump is widely adopted(\cite{xu2004optimal,you2011minimum,do2005discrete}). Moreover, further analysis has been accomplished even for an incomplete or partially known transition probability in Markovian process(\cite{zhang2008analysis,zhang2010necessary,xiong2005robust}). 

In the current paper we propose a probabilistic robustness analysis tool for any stochastic jump linear systems, but not necessarily for Markovian jump. The term robustness used here originate in the context of the ability for the system to resist given initial state uncertainties. Generally, initial states include uncertainties caused by measurement noise or sensor inaccuracy. Hence, we aim to analyse the robustness of stochastic jump linear systems with given initial state uncertainties. To assess both the stability and the robustness of the system, we adopt Wasserstein distance that defines a metric on the manifold of probability density functions. Under the assumption that the initial state uncertainties are given as Gaussian distribution, the state probability density function evolves into the combination of multi-Gaussian, which is the mixture of Gaussian(MoG). The stochastic switching laws give rise to the exponential growth of Gaussian components in the MoG, which incurs the computational complexity. In this paper, we prove that there exists an alternative way to analyse the robustness of stochastic jump linear systems using Wasserstein distance. This solution provides an identical information without any approximation errors while avoiding an MoG expression.


Rest of this paper is organized as follows. Section II introduces preliminaries for stochastic jump linear systems. Brief explanations of Wasserstein distance are described in Section III. Section IV shows how to propagate initial state uncertainties and compute Wasserstein distance. Then, section V demonstrates the usefulness of the proposed method by showing example. Finally, Section VI concludes this paper.

\section{Preliminaries}
\textbf{Notation:} Most notations are standard. The set of real and non-negative integer are denoted by $\mathbb{R}$ and $\mathbb{Z}^+$, respectively. Further, $\mathcal{I}$ represents the set of switching modes. We denote trace of a matrix using a notation$\tr\left(\cdot\right)$. Abbreviation m.s. stands for the convergence in mean-square sense. The notation $X \sim \varsigma\left(x\right)$ denotes that the random variable $X$ has probability density function (PDF) $\varsigma\left(x\right)$. The symbol $\mathcal{N}\left(\mu,\Sigma\right)$ is used to denote the PDF of a Gaussian random vector with mean $\mu$ and covariance $\Sigma$.

Consider a discrete-time jump linear system, given by 
\begin{eqnarray}
x(k+1) = A_{\sigma_k}x(k),\quad k\in\mathbb{Z}^{+}, \sigma_k\in\mathcal{I},\label{3}
\end{eqnarray}
where the state vectors $x\in \mathbb{R}^{n}$, the system matrices $A_{\sigma_{k}}\in\mathbb{R}^{n\times n}$ and the set of modes $\mathcal{I}=\{1,2,\cdots,m\}$. 

For stochastic jump linear systems, we can define a switching probability at time $k$ as $\pi(k)=[\pi_1(k),\pi_2(k),\cdots,\pi_m(k)]$, where $\displaystyle\sum_{j=1}^{m}\pi_j(k)=1$.

\begin{definition}(Stochastic jump linear system)
Tuples of the form $(\pi(k),A_{\sigma_k},\mathcal{I})$ is termed as discrete-time stochastic jump linear system, provided the mode dynamics are given by \eqref{3}. $\pi(k)$ denotes the time-varying occupation probability vectors for prescribed stochastic switching process $\sigma_k$.
\end{definition}


Stochastic jump linear systems have each element $\pi_j$, which is any fractional number, satisfying $0\leq\pi_j\leq 1$. A Markov jump linear system where the switching is governed by Markovian process is one example of stochastic jump linear systems. The switching rule under Markovian process obeys the structure $\pi(k+1) = \pi(k)P$, where $P$ is Markov transition probability matrix. This implies that the temporal evolution of a switching probability $\pi(k)$ depends on $P$ matrix. Unlike Markov jump systems, however, we do not impose any restriction on the stochastic switching rules at any time, and hence $\pi_j$ can form any arbitrary real fractional number between $0$ and $1$.

\begin{remark} (Stationary switching sequence)
A switching sequence for stochastic jump linear system is called stationary, if the occupation probability vector $\pi\left(k\right)$ remains stationary in time. In particular, a stationary deterministic switching sequence implies execution of a single mode (no switching). A stationary randomized switching sequence implies i.i.d. jump process.
\label{StationarySwitching}
\end{remark}

\section{Wasserstein Distance}
Here we assume that the initial state uncertainties are given by Gaussian PDF. Note that these types of initial state uncertainties, where the main sources usually come from measurement noise or sensor inaccuracy are common in real implementation level. Hence, two different types of uncertainties are involved in this problem; initial state uncertainties and stochastic switching laws. In order to measure the robustness against initial state uncertainties together with stochastic switching, we adopt Wasserstein distance which defines a metric on the manifold of PDFs.

\begin{definition} ({Wasserstein distance})
Consider the metric space $\ell_{2}\left(\mathbb{R}^{n}\right)$ and let the vectors $x_{1}, x_{2} \in \mathbb{R}^{n}$. Let $\mathcal{P}_{2}(\varsigma_{1},\varsigma_{2})$ denote the collection of all probability measures $\varsigma$ supported on the product space $\mathbb{R}^{2n}$, having finite second moment, with first marginal $\varsigma_{1}$ and second marginal $\varsigma_{2}$. Then the $L_{2}$ Wasserstein distance of order 2, denoted as $_{2}W_{2}$, between two probability measures $\varsigma_{1},\varsigma_{2}$, is defined as
\begin{align}\label{Wassdefn}
&_{2}W_{2}(\varsigma_{1},\varsigma_{2}) \triangleq \\ \nonumber & \left(\displaystyle\inf_{\varsigma\in\mathcal{P}_{2}(\varsigma_{1},\varsigma_{2})}\displaystyle\int_{\mathbb{R}^{2n}} \parallel x_{1}-x_{2}\parallel_{\ell_{2}\left(\mathbb{R}^{n}\right)}^{2} \: d\varsigma(x_{1},x_{2}) \right)
^{\frac{{1}}{2}}.
\label{W-dist}
\end{align}
\end{definition}
\begin{remark}
Intuitively, Wasserstein distance equals the \emph{least amount of work} needed to morph one distributional shape to the other, and can be interpreted as the cost for Monge-Kantorovich optimal transportation plan \cite{villani2003topics}. For notational ease, we henceforth denote $_{2}W_{2}$ as $W$. Further, one can prove (p. 208, \cite{villani2003topics}) that $W$ defines a metric on the manifold of PDFs.
\label{WassRemarkFirst}
\end{remark}
Next, we present new results for stability in terms of $W$.
 
\begin{proposition}\label{m.s.stable}
If we fix Dirac distribution as the reference measure, then distributional convergence in Wasserstein metric is \emph{necessary and sufficient} for convergence in m.s. sense.\label{WConvMeanSqConvProposition}
\end{proposition}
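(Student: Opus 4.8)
The plan is to reduce the statement to a one-line computation by exploiting the fact that optimal transport becomes trivial when one marginal is a Dirac measure. Throughout, let $\delta_{0}$ denote the Dirac distribution at the origin, taken as the reference measure, and let $\varsigma_{k}$ be the PDF of the state $x(k)$. First I would record that each $\varsigma_{k}$ has finite second moment: $x(0)$ is Gaussian and $x(k)$ is a (random) finite product of the matrices $A_{\sigma_{j}}$ applied to $x(0)$, so $\mathbb{E}[\|x(k)\|^{2}]<\infty$ for every $k$ and the quantity $W(\varsigma_{k},\delta_{0})$ in \eqref{Wassdefn} is well defined.

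The key step is to observe that $\mathcal{P}_{2}(\varsigma_{k},\delta_{0})$ is a singleton. Indeed, any probability measure on $\mathbb{R}^{2n}$ whose second marginal equals $\delta_{0}$ must be supported on $\mathbb{R}^{n}\times\{0\}$, and matching the first marginal then forces it to be the product measure $\varsigma_{k}\otimes\delta_{0}$. Hence the infimum in \eqref{Wassdefn} is attained by this unique coupling, giving
\[
W(\varsigma_{k},\delta_{0})^{2}=\int_{\mathbb{R}^{2n}}\|x_{1}-x_{2}\|^{2}\,d(\varsigma_{k}\otimes\delta_{0})(x_{1},x_{2})=\int_{\mathbb{R}^{n}}\|x_{1}\|^{2}\,d\varsigma_{k}(x_{1})=\mathbb{E}\!\left[\|x(k)\|^{2}\right].
\]

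With this identity in hand, both implications are immediate. Convergence of $x(k)$ to the zero state in mean-square sense is, by definition, $\mathbb{E}[\|x(k)\|^{2}]\to 0$ as $k\to\infty$, which by the displayed equality holds if and only if $W(\varsigma_{k},\delta_{0})\to 0$, i.e.\ if and only if $\varsigma_{k}$ converges to the reference Dirac in the Wasserstein metric. The metric property of $W$ noted in Remark \ref{WassRemarkFirst} is consistent with this, since $W(\varsigma_{k},\delta_{0})=0$ exactly when $\varsigma_{k}=\delta_{0}$, matching $\mathbb{E}[\|x(k)\|^{2}]=0$.

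There is no substantial obstacle in this argument; the only points requiring care are (i) verifying that the iterates keep finite second moments so that $W$ is well defined on the relevant PDFs, and (ii) being explicit that ``m.s.\ convergence'' here is understood as convergence to the zero state, which is precisely why the Dirac at the origin is the appropriate reference measure. Beyond these bookkeeping remarks, the proposition is the single computation displayed above.
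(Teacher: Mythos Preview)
Your proof is correct and follows essentially the same route as the paper: both arguments hinge on the observation that $\mathcal{P}_{2}(\varsigma,\delta_{0})$ is a singleton, which collapses the infimum in \eqref{Wassdefn} to the identity $W(\varsigma,\delta_{0})^{2}=\mathbb{E}[\|X\|^{2}]$, from which the equivalence is immediate. The only cosmetic difference is that the paper invokes the general fact ``m.s.\ convergence $\Rightarrow$ distributional convergence'' for the converse direction, whereas you (more directly) read both implications off the same identity.
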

\begin{proof}
Consider a sequence of $n$-dimensional joint PDFs $\{\varsigma_{j}\left(x\right)\}_{j=1}^{\infty}$, that converges to $\delta\left(x\right)$ in distribution, i.e., $\displaystyle\lim_{j\rightarrow\infty} W\left(\varsigma_{j}(x), \delta(x)\right) = 0 = \displaystyle\lim_{j\rightarrow\infty} W^{2}\left(\varsigma_{j}(x), \delta(x)\right)$. From (\ref{Wassdefn}), we have
\begin{eqnarray}
&W^{2}\left(\varsigma_{j}(x), \delta(x)\right) = \displaystyle\inf_{\varsigma\in\mathcal{P}_{2}(\varsigma_{j}(x),\delta(x))} \mathbb{E}\left[\parallel X_{j} - 0 \parallel_{\ell_{2}\left(\mathbb{R}^{n}\right)}^{2}\right]\label{m.s.stab}\\ \nonumber 
& = \mathbb{E}\left[\parallel X_{j} \parallel_{\ell_{2}\left(\mathbb{R}^{n}\right)}^{2}\right].
\end{eqnarray}
where the random variable $X_{j} \sim \varsigma_{j}\left(x\right)$, and the last equality follows from the fact that $\mathcal{P}_{2}(\varsigma_{j}(x),\delta(x)) = \{\varsigma_{j}(x)\}$ $\forall \: j$, thus obviating the infimum. From \eqref{m.s.stab}, $\displaystyle\lim_{j\rightarrow\infty} W\left(\varsigma_{j}(x), \delta(x)\right) = 0 \Rightarrow \displaystyle\lim_{j\rightarrow\infty} \mathbb{E}\left[\parallel X_{j} \parallel_{\ell_{2}}^{2}\right] = 0$, establishing distributional convergence to $\delta(x) \Rightarrow$ m.s. convergence. Conversely, m.s. convergence $\Rightarrow$ distributional convergence, is well-known \cite{grimmett2001probability} and unlike the other direction, holds for arbitrary reference measure.
\end{proof}

\begin{proposition}($W^{2}$ between Gaussian and Dirac PDF (see e.g., p. 160-161,\cite{hassani2013mathematical}))
The Wasserstein distance between Gaussian and Dirac PDF supported on $\mathbb{R}^{n}$, with respective joint PDFs $\varsigma = \mathcal{N}\left(\mu,\Sigma\right)$ and $\delta\left(x\right) = \displaystyle\lim_{\mu,\Sigma \rightarrow 0} \mathcal{N}\left(\mu,\Sigma\right)$, is given by,
\begin{eqnarray}\label{staticW}
W^{2}\left(\mathcal{N}\left(\mu,\Sigma\right), \delta\left(x\right)\right) = \parallel \mu \parallel_{\ell_{2}\left(\mathbb{R}^{n}\right)}^{2} + \: \text{tr}\left(\Sigma\right).
\label{GaussianDiracW}
\end{eqnarray}
\label{GaussToDiracCorollary}
\end{proposition}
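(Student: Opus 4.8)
The plan is to exploit the fact that when one of the two marginals is a Dirac mass, the set of admissible couplings in \eqref{Wassdefn} collapses to a single element, so the infimum becomes vacuous. First I would show that $\mathcal{P}_{2}\!\left(\mathcal{N}(\mu,\Sigma),\delta(x)\right)$ is the singleton whose only element is the product $\mathcal{N}(\mu,\Sigma)\otimes\delta$: any joint measure $\varsigma$ on $\mathbb{R}^{2n}$ whose second marginal is $\delta(x)$ must place all of its mass on the affine subspace $\{(x_{1},x_{2}):x_{2}=0\}$, hence factors as $\varsigma(dx_{1},dx_{2})=\mathcal{N}(\mu,\Sigma)(dx_{1})\,\delta(dx_{2})$, i.e. the unique coupling has $X_{1}\sim\mathcal{N}(\mu,\Sigma)$ and $X_{2}=0$ almost surely. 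This is the same observation already used in the proof of Proposition \ref{WConvMeanSqConvProposition}.

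With the infimum gone, \eqref{Wassdefn} reads $W^{2}\!\left(\mathcal{N}(\mu,\Sigma),\delta(x)\right)=\mathbb{E}\!\left[\|X_{1}-0\|_{\ell_{2}(\mathbb{R}^{n})}^{2}\right]=\mathbb{E}\!\left[\|X_{1}\|_{\ell_{2}(\mathbb{R}^{n})}^{2}\right]$, where $X_{1}\sim\mathcal{N}(\mu,\Sigma)$. Next I would evaluate this second moment via linearity and the cyclic property of the trace, $\mathbb{E}[\|X_{1}\|_{\ell_{2}}^{2}]=\mathbb{E}[X_{1}^{\top}X_{1}]=\mathbb{E}[\text{tr}(X_{1}X_{1}^{\top})]=\text{tr}\!\left(\mathbb{E}[X_{1}X_{1}^{\top}]\right)$, and then invoke the standard second-moment identity $\mathbb{E}[X_{1}X_{1}^{\top}]=\Sigma+\mu\mu^{\top}$ to conclude $W^{2}=\text{tr}(\Sigma+\mu\mu^{\top})=\text{tr}(\Sigma)+\|\mu\|_{\ell_{2}(\mathbb{R}^{n})}^{2}$, which is exactly \eqref{staticW}.

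The one delicate point — the main obstacle — is the meaning of the limiting object $\delta(x)=\lim_{\mu,\Sigma\to 0}\mathcal{N}(\mu,\Sigma)$, since $\delta$ is not absolutely continuous and hence is not a PDF in the literal sense. Treating $\delta$ directly as a probability measure, as above, sidesteps any interchange-of-limits concern precisely because the coupling set is a genuine singleton. An alternative route is to start from the classical closed form for the Wasserstein distance between two nondegenerate Gaussians, $W^{2}(\mathcal{N}(\mu_{1},\Sigma_{1}),\mathcal{N}(\mu_{2},\Sigma_{2}))=\|\mu_{1}-\mu_{2}\|_{\ell_{2}}^{2}+\text{tr}\!\big(\Sigma_{1}+\Sigma_{2}-2(\Sigma_{2}^{1/2}\Sigma_{1}\Sigma_{2}^{1/2})^{1/2}\big)$, put $\mu_{2}=0$, and let $\Sigma_{2}\to 0$; this would require checking that the cross term vanishes in the limit and that $W$ is continuous along the chosen sequence (e.g. via weak convergence together with uniform integrability of the second moments), which is where the real work lies in that approach. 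I would therefore present the singleton-coupling argument as the main proof and mention the Gaussian-limit computation only as a consistency check.
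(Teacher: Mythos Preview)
Your singleton-coupling argument is correct and complete. Note, however, that the paper does not supply its own proof of this proposition: it is stated with a citation to \cite{hassani2013mathematical} and no proof environment follows. There is therefore nothing in the paper to compare your argument against directly. That said, your approach is precisely the natural continuation of the observation already made in the proof of Proposition~\ref{WConvMeanSqConvProposition}, namely that $\mathcal{P}_{2}(\varsigma,\delta)$ is a singleton so the infimum is vacuous; you simply carry the computation $\mathbb{E}[\|X_{1}\|^{2}]=\text{tr}(\Sigma)+\|\mu\|^{2}$ one step further for the Gaussian case. Your remark that the limiting route via the closed-form Gaussian--Gaussian $W^{2}$ requires an additional continuity justification is accurate, and presenting the direct argument as primary is the right call.
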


Note that the reference that is the stationary equilibrium point in most cases is assumed to be deterministic with Dirac distribution. Therefore, $W$ provides a quantitative measurement, representing how the state PDF converges to Dirac PDF, where the state reference is placed. Having provided the m.s. stability with $W$, the robustness of the stochastic jump linear system with respect to initial state uncertainties is studied in the next section.

\section{Robustness Analysis}
The robustness analysis has two stages; uncertainty propagation and $W$ computation. With given initial state PDF we propagate the initial PDF along dynamics, followed by the robustness analysis of stochastic jump linear systems in terms of $W$ distance.

\subsection{Uncertainty Propagation}
Note that two different types of uncertainties are involved in the problem. First, initial state uncertainties are given as Gaussian PDF $\mathcal{N}(\mu_0,\Sigma_0)$. Second, stochastic switching laws $\pi=[\pi_1,\pi_2, \cdots,\pi_m]$ result in a randomness of system trajectories. Reflecting these two types of uncertainties, we need to propagate the state PDF properly.

Due to stochastic switching laws the state PDF $\rho(k)$ does not remain Gaussian, but rather form MoG even though the initial state PDF starts from a Gaussian PDF $\mathcal{N}(\mu_0,\Sigma_0)$. The closed form expression of the state PDF can be represented by the following lemma.

\begin{proposition}\label{Prop:4.1}
Given $m$ absolutely continuous random variables $X_{1}, \hdots, X_{m}$, with respective cumulative distribution function(CDF) $F_{j}\left(x\right)$, and PDF $\varsigma_{j}\left(x\right)$, $\forall j \in \mathcal{I}$. Let $X \triangleq X_{j}$, with probability $\alpha_{j} \in [0,1]$, $\displaystyle\sum_{i=1}^{m} \alpha_{j} = 1$. Then, the CDF and PDF of $X$ are given by 
\begin{align}
F\left(x\right) = \displaystyle\sum_{j=1}^{m} \alpha_{j} F_{j}\left(x\right),\quad\varsigma\left(x\right) = \displaystyle\sum_{j=1}^{m} \alpha_{j} \varsigma_{j}\left(x\right).
\end{align}
\label{RandomVarProposition}
\end{proposition}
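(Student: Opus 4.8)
The plan is to first make rigorous the informal prescription ``$X \triangleq X_j$ with probability $\alpha_j$'' by enlarging the probability space with an auxiliary discrete \emph{selector} $S$ taking the value $j \in \mathcal{I}$ with probability $\mathbb{P}(S=j) = \alpha_j$, chosen independent of the family $\{X_1,\hdots,X_m\}$, and then setting $X \triangleq X_S$. Under this construction the two claimed identities reduce to a direct application of the law of total probability followed by differentiation.

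For the CDF, I would condition on $S$: for every $x$,
\[
F(x) = \mathbb{P}(X \le x) = \sum_{j=1}^{m} \mathbb{P}(X \le x \mid S=j)\,\mathbb{P}(S=j) = \sum_{j=1}^{m} \alpha_j\, \mathbb{P}(X_j \le x) = \sum_{j=1}^{m} \alpha_j F_j(x),
\]
where the conditional identity $\mathbb{P}(X \le x \mid S=j) = \mathbb{P}(X_j \le x)$ uses independence of $S$ and $\{X_j\}$, and the last step is the definition of $F_j$. This proves the first assertion. For the PDF, I would differentiate: since each $X_j$ is absolutely continuous, $F_j$ is absolutely continuous with (componentwise mixed) derivative $\varsigma_j$ almost everywhere, and a finite convex combination of absolutely continuous functions is again absolutely continuous, so $F$ is differentiable a.e.; differentiating the finite sum term by term gives
\[
\varsigma(x) = \sum_{j=1}^{m} \alpha_j\, \varsigma_j(x) \quad \text{a.e.}
\]
The right-hand side is nonnegative and integrates to $\sum_{j=1}^{m}\alpha_j = 1$, hence is a genuine PDF; integrating it recovers $F(x) = \sum_{j=1}^m \alpha_j \int^{x}\varsigma_j = \int^{x}\varsigma$, confirming consistency with the CDF formula.

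I expect essentially no obstacle here: once the randomized selection is formalized the argument is pure bookkeeping. The only point deserving care is that first modeling step — interpreting ``with probability $\alpha_j$'' as an independent categorical draw $S$ on an enlarged probability space — after which the law of total probability delivers the CDF and term-by-term differentiation (legitimate precisely because the sum is finite, so no interchange-of-limit subtleties arise) delivers the PDF.
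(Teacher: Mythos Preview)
Your proposal is correct and follows essentially the same approach as the paper: apply the law of total probability to obtain the CDF formula, then differentiate term by term using absolute continuity to get the PDF. Your version is somewhat more careful in that you explicitly introduce an independent selector $S$ (the paper writes the slightly informal $\mathbb{P}(X=X_j)$) and you justify the differentiation step more fully, but the underlying argument is identical.
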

\begin{proof}
\begin{align*}
F\left(x\right) &\triangleq \mathbb{P}\left(X \leq x\right) = \displaystyle\sum_{j=1}^{m} \mathbb{P}\left(X=X_{j}\right) \mathbb{P}\left(X_{j} \leq x\right)\\
 &= \displaystyle\sum_{j=1}^{m} \alpha_{j} F_{j}\left(x\right).
\end{align*}
where we have used the law of total probability. Since each $X_{j}$ and hence $X$, is absolutely continuous, we have $\varsigma\left(x\right) = \displaystyle\sum_{j=1}^{m} \alpha_{j} \varsigma_{j}\left(x\right)$.
\end{proof}

\begin{corollary}\label{corollary:mog_closedform}
Consider a stochastic jump linear sytem $\left(\pi\left(k\right), \{A_{j}\}_{j=1}^{m}, \mathcal{I}\right)$ with initial PDF $\rho_{0} = \displaystyle \mathcal{N}\left(\mu_{{0}},\Sigma_{{0}}\right)$. Then the state PDF at time $k$, denoted by $\rho\left(k\right)$, is given by
\begin{align}
\rho\left(k\right) &= \displaystyle\sum_{j_{k}=1}^{m}\displaystyle\sum_{j_{k-1}=1}^{m} \hdots \displaystyle\sum_{j_{1}=1}^{m} \displaystyle\left(\prod_{r=1}^{k} \pi_{j_{r}}\left(r\right)\right)\qquad  \nonumber\\ 
&\left.\qquad\qquad\qquad\qquad\mathcal{N}\left(A_{j_k}^*\mu_{{0}}, A_{j_k}^*\Sigma_{{0}}A_{j_k}^{*{\top}}\right)\right..
\label{dtSJLSstatePDF}
\end{align}
where $\displaystyle A_{j_{k}}^*\triangleq \prod_{r=k}^{1}A_{j_r}=A_{j_k}A_{j_{k-1}}\hdots A_{j_{2}}A_{j_{1}}$.
\end{corollary}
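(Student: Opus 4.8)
The plan is to induct on the time index $k$, letting Proposition~\ref{RandomVarProposition} handle the mixture bookkeeping at each step while the elementary rule ``a linear image of a Gaussian is Gaussian'' handles the continuous part: if $X\sim\mathcal{N}(\mu,\Sigma)$ then $AX\sim\mathcal{N}(A\mu,A\Sigma A^{\top})$, and since the switching process is exogenous it is independent of the initial vector $x(0)$.

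\emph{Base case $(k=1)$.} Conditioned on the first mode equaling $j_{1}$, an event of probability $\pi_{j_{1}}(1)$ that is independent of $x(0)$, we have $x(1)=A_{j_{1}}x(0)\sim\mathcal{N}(A_{j_{1}}\mu_{0},A_{j_{1}}\Sigma_{0}A_{j_{1}}^{\top})$. Applying Proposition~\ref{RandomVarProposition} with $\alpha_{j_{1}}=\pi_{j_{1}}(1)$ (which sum to one) and these Gaussian components gives $\rho(1)=\sum_{j_{1}=1}^{m}\pi_{j_{1}}(1)\,\mathcal{N}(A_{j_{1}}^{*}\mu_{0},A_{j_{1}}^{*}\Sigma_{0}A_{j_{1}}^{*\top})$ with $A_{j_{1}}^{*}=A_{j_{1}}$, which is \eqref{dtSJLSstatePDF} at $k=1$. \emph{Inductive step.} Suppose \eqref{dtSJLSstatePDF} holds at time $k$. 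From $x(k+1)=A_{\sigma_{k}}x(k)$, conditioned on $\sigma_{k}=j_{k+1}$ (probability $\pi_{j_{k+1}}(k+1)$) the PDF of $x(k+1)$ is the image of $\rho(k)$ under $y\mapsto A_{j_{k+1}}y$. That map distributes over the finite convex combination in \eqref{dtSJLSstatePDF} and sends the component $\mathcal{N}(A_{j_{k}}^{*}\mu_{0},A_{j_{k}}^{*}\Sigma_{0}A_{j_{k}}^{*\top})$ to $\mathcal{N}(A_{j_{k+1}}A_{j_{k}}^{*}\mu_{0},A_{j_{k+1}}A_{j_{k}}^{*}\Sigma_{0}A_{j_{k}}^{*\top}A_{j_{k+1}}^{\top})$, and by the ordered-product convention $A_{j_{k+1}}A_{j_{k}}^{*}=A_{j_{k+1}}A_{j_{k}}\hdots A_{j_{1}}$ is exactly the length-$(k+1)$ product. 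A second application of Proposition~\ref{RandomVarProposition}, now with the outer choice $\sigma_{k}=j_{k+1}$ and weights $\pi_{j_{k+1}}(k+1)$, multiplies the $k$-fold weight of the inductive hypothesis by $\pi_{j_{k+1}}(k+1)$, producing $\prod_{r=1}^{k+1}\pi_{j_{r}}(r)$ and the claimed form at $k+1$. One can also bypass induction by conditioning on the entire switching path $(\sigma_{0},\hdots,\sigma_{k-1})$ at once and invoking Proposition~\ref{RandomVarProposition} with the $m^{k}$ path-indexed alternatives.

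The step I expect to be the real obstacle is justifying that the scenario weight is the \emph{product of one-step occupation probabilities} $\prod_{r=1}^{k}\pi_{j_{r}}(r)$ and not some genuine joint switching-path probability: this is exactly where the model is used, namely that the modes at distinct times are independent, each governed by the corresponding occupation vector $\pi(\cdot)$ (the i.i.d./stationary-randomized setting of Remark~\ref{StationarySwitching}, here allowed to be time-varying). Only under this independence does conditioning on $x(k)$ --- which already carries information about $\sigma_{0},\hdots,\sigma_{k-1}$ --- not alter the law of $\sigma_{k}$, so that the weights factorize; for a strictly Markovian rule one would instead accumulate transition-probability products. Two minor loose ends remain: the non-commutative bookkeeping for $A_{j_{k+1}}^{*}$, dispatched by the ordered-product convention above; and the possibility that $A_{j_{k}}^{*}\Sigma_{0}A_{j_{k}}^{*\top}$ is singular when some $A_{j}$ is rank-deficient, in which case $\mathcal{N}(\cdot,\cdot)$ is read as a possibly degenerate Gaussian in the limiting sense of Proposition~\ref{GaussToDiracCorollary} and the absolute-continuity hypothesis of Proposition~\ref{RandomVarProposition} is relaxed to the statement at the level of CDFs, which is all that is actually used.
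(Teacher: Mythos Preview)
Your proof is correct and follows essentially the same route as the paper: both argue by recursion on $k$, using Proposition~\ref{RandomVarProposition} to assemble the mixture at each step and the rule that a linear image of a Gaussian is Gaussian to update the components; the paper simply unrolls the cases $k=1,2$ and then writes ``continuing with this recursion,'' whereas you phrase it as a clean induction. Your explicit discussion of why the path weight factorizes as $\prod_{r=1}^{k}\pi_{j_{r}}(r)$ (independence of the mode choices across time) and of the degenerate-covariance caveat are points the paper passes over silently, so your version is in fact more careful on those fronts.
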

\begin{proof}
Starting from $\rho_{0}$ at $k=0$, the modal PDF at time $k=1$, is given by
\begin{align}
\rho_{j}(1) &= \displaystyle\mathcal{N}\left(A_{j}\mu_{{0}}, A_{j}\Sigma_{{0}}A_{j}^{\top}\right), \: j=1,\cdots ,m \label{ModalPDFatTime1}
\end{align}
which follows from the fact that linear transformation of an MoG is an equal component MoG with linearly transformed component means and congruently transformed component covariances (see Theorem 6 and Corollary 7 in \cite{ali2011convergence}). From Proposition \ref{RandomVarProposition}, it follows that the state PDF at $k=1$, is
\begin{eqnarray}
\rho(1) = \displaystyle\sum_{j_{1}=1}^{m}\pi_{j_{1}}(1)\mathcal{N}\left(A_{j_{1}}\mu_{{0}}, A_{j_{1}}\Sigma_{{0}}A_{j_{1}}^{\top}\right),
\label{dtSJSPDFatTime1}
\end{eqnarray}
where $\pi_{j_{1}}(1)$ is the occupation probability for mode $j_{1}$ at time $k=1$. Notice that (\ref{dtSJSPDFatTime1}) is an MoG with $m$ component Gaussians. Proceeding likewise from this $\rho(1)$, we obtain
\begin{eqnarray}
\rho_{j}(2) &=& \displaystyle\sum_{j_{1}=1}^{m}\pi_{j_{1}}(1)\mathcal{N}\big((A_{j}A_{j_{1}})\mu_{{0}},\nonumber\\
&& (A_{j}A_{j_{1}})\Sigma_{{0}}(A_{j}A_{j_{1}})^{\top}\big), \quad j=1,\hdots,m,\\
\rho(2)&=&\displaystyle\sum_{j_{2}=1}^{m}\displaystyle\sum_{j_{1}=1}^{m}\pi_{j_{2}}(2)\pi_{j_{1}}(1)\mathcal{N}\big((A_{j_{2}}A_{j_{1}})\mu_{{0}},\nonumber\\
&& (A_{j_{2}}A_{j_{1}})\Sigma_{{0}}(A_{j_{2}}A_{j_{1}})^{\top}\big).
\label{ModalStatePDFatTime2}
\end{eqnarray}
Continuing with this recursion till time $k$, we arrive at (\ref{dtSJLSstatePDF}), which is an MoG with $m^{k}$ components. 
\end{proof}

According to Corollary \ref{corollary:mog_closedform}, the MoG has total $m^k$ components of Gaussian PDF at time $k$ as described in Fig.\ref{fig:mog} ($m=2$). Therefore, the growth of components in MoG is exponential in time. Consequently, the computation for uncertainty propagation will blow up even for a finite time and finite number of mode. Hence, for the practical purpose we need to find an alternative way to replace the MoG. 

\subsection{Wasserstein Computation}
In order to cope with the exponential growth of Gaussian components in MoG form as explained previously, we need to replace it. In this subsection, we introduce a new method to substitute the MoG PDF into an equivalent PDF in the $W$ space.

Following lemma presents that we can compute mean-covariance pairs $\left(\widehat{\mu},\widehat{\Sigma}\right)$ for any mixture PDF.
\begin{figure*}
\centering
\includegraphics[width=0.9\textwidth,height=10cm]{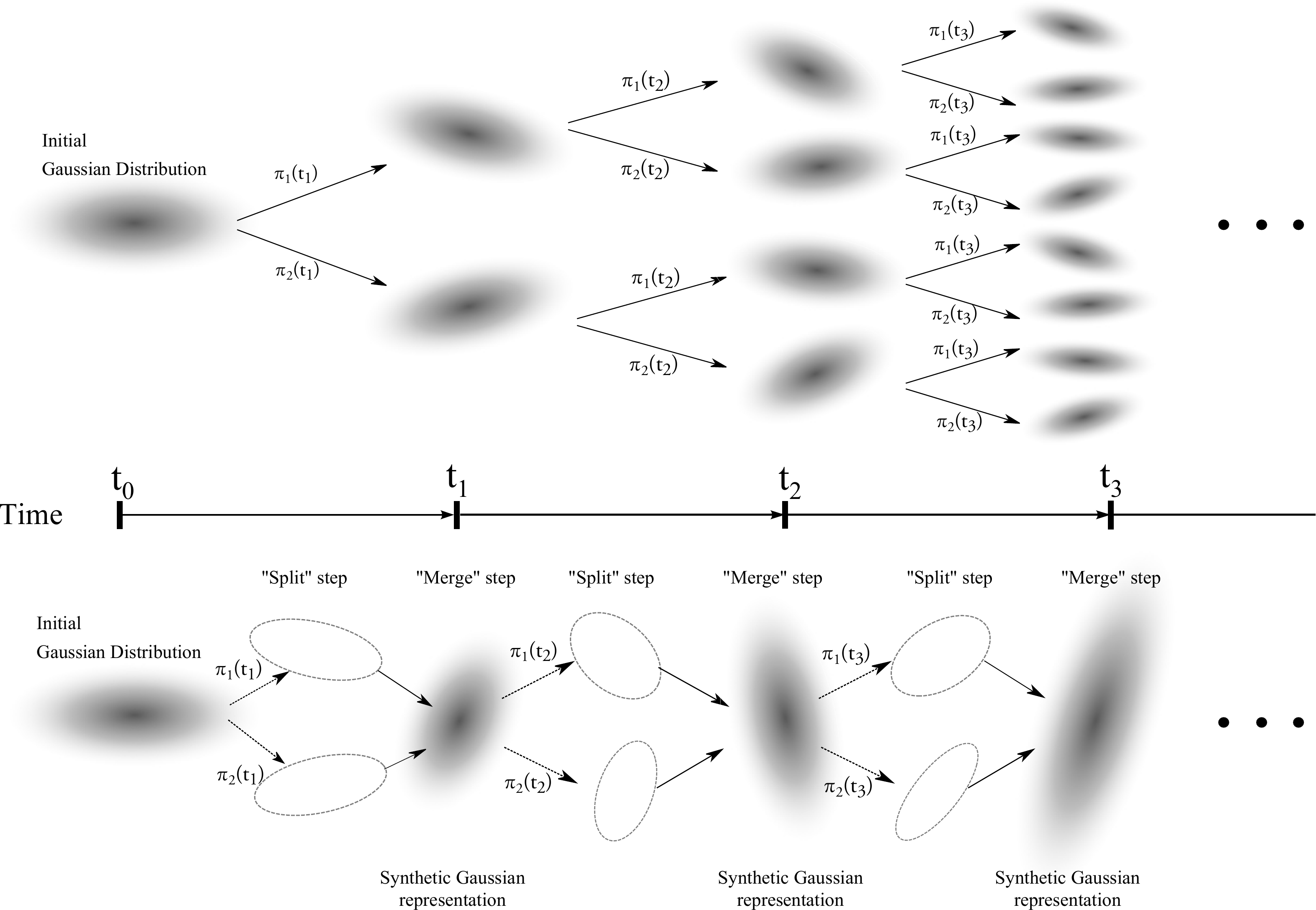}
\caption{Schematic of Uncertainty Propagation for a given Initial Gaussian PDF. Top figure shows spatio-temporal evolution of the state PDF in Mixture of Gaussian form that has $2^k$ Gaussian components at time $k$. Bottom figure presents the``Split-and-Merge'' algorithm for synthetic Gaussian representation, where the number of Gaussian components remains constant and is maximally $2$ in this case.}
\label{fig:mog}
\end{figure*}
\begin{lemma} \label{lemma:4.2}
Consider any mixture PDF $\varsigma(x) = \displaystyle\sum_{j=1}^{m} \pi_{j} \varsigma_{j}(x)$, with component mean-covariance pairs $\left(\mu_{j},\Sigma_{j}\right)$, $j=1,\hdots,m$. Then, the mean-covariance pair $\left(\widehat{\mu},\widehat{\Sigma}\right)$ for the mixture PDF $\varsigma(x)$, is given by
\begin{align}
\widehat{\mu} = \sum_{j=1}^{m}\pi_{j} \mu_{j},\quad\widehat{\Sigma} = \sum_{j=1}^{m} \pi_{j}\left(\Sigma_{j} + \left(\mu_{j}-\widehat{\mu}\right)\left(\mu_{j}-\widehat{\mu}\right)^{\top}\right)
\label{MeanCovHat}
\end{align}
\label{MeanCovMixPDF}
\end{lemma}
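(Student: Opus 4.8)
The plan is to compute the first two moments of $\varsigma(x)$ directly from its definition as a convex combination of the component densities, using only linearity of the integral together with the defining properties of each component, namely $\int_{\mathbb{R}^n}\varsigma_j(x)\,dx = 1$, $\int_{\mathbb{R}^n} x\,\varsigma_j(x)\,dx = \mu_j$, and $\int_{\mathbb{R}^n}(x-\mu_j)(x-\mu_j)^{\top}\varsigma_j(x)\,dx = \Sigma_j$. Since the mixture is a finite sum, interchanging summation and integration is immediate, so no measure-theoretic subtlety arises beyond assuming the component second moments are finite.

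First I would establish the mean formula: by definition $\widehat{\mu} \triangleq \int_{\mathbb{R}^n} x\,\varsigma(x)\,dx = \sum_{j=1}^{m}\pi_j \int_{\mathbb{R}^n} x\,\varsigma_j(x)\,dx = \sum_{j=1}^{m}\pi_j\mu_j$, and along the way one notes $\int_{\mathbb{R}^n}\varsigma(x)\,dx = \sum_{j=1}^{m}\pi_j = 1$, confirming that $\varsigma$ is itself a valid PDF. Next I would compute the covariance. Starting from $\widehat{\Sigma}\triangleq \int_{\mathbb{R}^n}(x-\widehat{\mu})(x-\widehat{\mu})^{\top}\varsigma(x)\,dx = \sum_{j=1}^{m}\pi_j\int_{\mathbb{R}^n}(x-\widehat{\mu})(x-\widehat{\mu})^{\top}\varsigma_j(x)\,dx$, the key step is the ``add and subtract $\mu_j$'' decomposition $x-\widehat{\mu} = (x-\mu_j) + (\mu_j-\widehat{\mu})$. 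Expanding the rank-one outer product produces four terms; the two cross terms are proportional to $\int_{\mathbb{R}^n}(x-\mu_j)\varsigma_j(x)\,dx = 0$ and hence vanish, leaving $\Sigma_j + (\mu_j-\widehat{\mu})(\mu_j-\widehat{\mu})^{\top}$ for the $j$-th component. Weighting by $\pi_j$ and summing over $j$ yields the claimed expression for $\widehat{\Sigma}$.

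There is no genuine obstacle here; the statement is essentially the elementary law of total variance obtained by conditioning on the mixture component. The only point deserving a moment's attention is the vanishing of the cross terms, which relies on the centered first moment of each component being taken against its own mean $\mu_j$ rather than against $\widehat{\mu}$ — this is precisely why the correction term $(\mu_j-\widehat{\mu})(\mu_j-\widehat{\mu})^{\top}$ appears in $\widehat{\Sigma}$. I would also remark that the argument uses only the component moments and never the Gaussian form, so the formula applies verbatim to the $m^k$-component state PDF of Corollary \ref{corollary:mog_closedform}, which is what makes it the basis for the Wasserstein-equivalent surrogate PDF developed next.
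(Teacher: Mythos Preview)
Your proof is correct and follows essentially the same route as the paper's: both compute the mixture mean and covariance directly by linearity of the integral over the component densities. Your add-and-subtract step uses $x-\widehat{\mu}=(x-\mu_j)+(\mu_j-\widehat{\mu})$, whereas the paper expands $\mathbb{E}[xx^{\top}]-\widehat{\mu}\widehat{\mu}^{\top}$ before simplifying; the two are algebraically equivalent and your version is arguably the cleaner presentation.
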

\begin{proof}
The mean of the mixure PDF is
\begin{align*}
\widehat{\mu} &\triangleq \displaystyle\int_{\mathbb{R}^{n}} x \varsigma(x) dx = \displaystyle\int_{\mathbb{R}^{n}} x \displaystyle\sum_{j=1}^{m} \pi_{j} \varsigma_{j}(x) dx \\
& = \displaystyle\sum_{j=1}^{m} \pi_{j} \displaystyle\int_{\mathbb{R}^{n}} x \varsigma_{j}(x) dx  = \displaystyle\sum_{j=1}^{m}\pi_{j} \mu_{j}
\end{align*}
Also, the covariance of the mixture PDF is
\begin{align*}
\widehat{\Sigma} &\triangleq \mathbb{E}\left[\left(x-\widehat{\mu}\right) \left(x-\widehat{\mu}\right)^{\top}\right] = \mathbb{E}\left[x x^{\top}\right] - \widehat{\mu}\widehat{\mu}^{\top} \\
&= \displaystyle\int_{\mathbb{R}^{n}} x x^{\top} \displaystyle\sum_{j=1}^{m} \pi_{j} \varsigma_{j}(x) dx - \widehat{\mu}\widehat{\mu}^{\top}\\
& = \displaystyle\sum_{j=1}^{m} \pi_{j} \displaystyle\int_{\mathbb{R}^{n}} \left(x - \widehat{\mu} + \widehat{\mu}\right) \left(x - \widehat{\mu} + \widehat{\mu}\right)^{\top} \varsigma_{j}\left(x\right) dx - \widehat{\mu}\widehat{\mu}^{\top} \\
&=\displaystyle\sum_{j=1}^{m}\pi_{j}\left(\Sigma_{j}+\left(\mu_{j}-\widehat{\mu}\right)\left(\mu_{j}-\widehat{\mu}\right)^{\top}\right)
\end{align*}
\end{proof}

By using Lemma \ref{lemma:4.2}, we can construct a synthetic Gaussian that has a mean $\widehat{\mu}$ and a covariance $\widehat{\Sigma}$. Since a real state PDF with an MoG form may hold higher moments other than first and second, this synthetic Gaussian that has only first and second moment does not preserve the exact information. However, most importantly, the following lemma and theorem show that ``$W$'' distance preserves exactly identical information between MoG PDF $\rho(k)$ and a synthetic Gaussian PDF $\mathcal{N}(\widehat{\mu},\widehat{\Sigma})$.

\begin{lemma}\label{lemma:4.3}
At any time $k$, let the state PDF for stochastic jump linear system $\rho(k)$ be of the form in \eqref{dtSJLSstatePDF}. If we define $W(k)\triangleq W\left(\rho(k,x),\delta(x)\right)$ and $W_j(k)\triangleq W\left(\mathcal{N}_j(k,x),\delta(x)\right)$, then we have
\begin{align}
W^2(k) = \sum_{j=1}^{m}\pi_j(k)W_j^2(k),\quad \forall k
\end{align}
\end{lemma}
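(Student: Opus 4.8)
The plan is to reduce everything to the elementary identity that, for \emph{any} PDF $\varsigma$ on $\mathbb{R}^{n}$ with finite second moment and mean--covariance pair $(\mu,\Sigma)$, one has $W^{2}\left(\varsigma,\delta\right) = \|\mu\|_{\ell_{2}\left(\mathbb{R}^{n}\right)}^{2} + \tr\left(\Sigma\right)$. This is exactly the computation already carried out inside the proof of Proposition \ref{WConvMeanSqConvProposition}: the coupling set $\mathcal{P}_{2}\left(\varsigma,\delta\right)$ is the singleton $\{\varsigma\}$, so the infimum in \eqref{Wassdefn} disappears and $W^{2}\left(\varsigma,\delta\right) = \mathbb{E}\left[\|X\|_{\ell_{2}}^{2}\right]$ with $X\sim\varsigma$; the stated formula then follows from the standard second-moment decomposition $\mathbb{E}\left[\|X\|^{2}\right] = \|\mathbb{E}[X]\|^{2} + \tr\left(\operatorname{Cov}(X)\right)$. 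In particular this is valid for the MoG $\rho(k)$ even though it is not Gaussian, which is the point that makes the argument go through.

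Next I would apply this identity twice. First, to each Gaussian component $\mathcal{N}_{j}(k) = \mathcal{N}\left(A_{j_{k}}^{*}\mu_{0}, A_{j_{k}}^{*}\Sigma_{0}A_{j_{k}}^{*\top}\right)$ appearing in \eqref{dtSJLSstatePDF}, getting $W_{j}^{2}(k) = \|\mu_{j}(k)\|^{2} + \tr\left(\Sigma_{j}(k)\right)$ where $\left(\mu_{j}(k),\Sigma_{j}(k)\right)$ is the component mean--covariance pair (this also agrees with Proposition \ref{GaussToDiracCorollary}). Second, to $\rho(k)$ itself, using Lemma \ref{lemma:4.2} to write its mean--covariance pair as $\left(\widehat{\mu},\widehat{\Sigma}\right)$ with $\widehat{\mu} = \sum_{j}\pi_{j}(k)\mu_{j}(k)$ and $\widehat{\Sigma} = \sum_{j}\pi_{j}(k)\left(\Sigma_{j}(k) + \left(\mu_{j}(k)-\widehat{\mu}\right)\left(\mu_{j}(k)-\widehat{\mu}\right)^{\top}\right)$, so that $W^{2}(k) = \|\widehat{\mu}\|^{2} + \tr\left(\widehat{\Sigma}\right)$.

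It then remains to expand $\|\widehat{\mu}\|^{2} + \tr\left(\widehat{\Sigma}\right)$ and recognize the claimed convex combination. Taking the trace of $\widehat{\Sigma}$ gives $\tr\left(\widehat{\Sigma}\right) = \sum_{j}\pi_{j}(k)\tr\left(\Sigma_{j}(k)\right) + \sum_{j}\pi_{j}(k)\|\mu_{j}(k)-\widehat{\mu}\|^{2}$, and expanding the last sum as $\sum_{j}\pi_{j}(k)\|\mu_{j}(k)\|^{2} - 2\widehat{\mu}^{\top}\sum_{j}\pi_{j}(k)\mu_{j}(k) + \|\widehat{\mu}\|^{2} = \sum_{j}\pi_{j}(k)\|\mu_{j}(k)\|^{2} - \|\widehat{\mu}\|^{2}$ (using $\sum_{j}\pi_{j}(k)\mu_{j}(k) = \widehat{\mu}$) makes the $\|\widehat{\mu}\|^{2}$ terms cancel, leaving $W^{2}(k) = \sum_{j}\pi_{j}(k)\left(\|\mu_{j}(k)\|^{2} + \tr\left(\Sigma_{j}(k)\right)\right) = \sum_{j}\pi_{j}(k)W_{j}^{2}(k)$.

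The only genuinely delicate point is the first one: one must be sure that the singleton-coupling reduction $W^{2}\left(\varsigma,\delta\right) = \mathbb{E}\left[\|X\|^{2}\right]$ applies to a general (non-Gaussian) mixture PDF, not merely to a Gaussian. Everything after that is a short moment bookkeeping exercise; the mild subtlety there is just to keep track of the cross term $-2\widehat{\mu}^{\top}\sum_{j}\pi_{j}(k)\mu_{j}(k)$ and use $\sum_{j}\pi_{j}(k) = 1$ together with $\widehat{\mu} = \sum_{j}\pi_{j}(k)\mu_{j}(k)$ so the quadratic-in-$\widehat{\mu}$ contributions vanish.
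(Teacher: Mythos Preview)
Your argument is correct, but it is considerably more elaborate than the paper's. The paper's proof uses only the first step you identify --- that $W^{2}(\varsigma,\delta)=\mathbb{E}\bigl[\|X\|^{2}\bigr]=\int_{\mathbb{R}^{n}}\|x\|^{2}\varsigma(x)\,dx$ for any $\varsigma$ with finite second moment --- and then simply substitutes the mixture form $\varsigma=\sum_{j}\pi_{j}\varsigma_{j}$ and uses linearity of the integral to split this into $\sum_{j}\pi_{j}\int\|x\|^{2}\varsigma_{j}(x)\,dx=\sum_{j}\pi_{j}W_{j}^{2}$. No mean--covariance decomposition, no Lemma \ref{lemma:4.2}, no cancellation bookkeeping is needed. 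Your route instead passes through $W^{2}(k)=\|\widehat{\mu}\|^{2}+\tr(\widehat{\Sigma})$, expands $\widehat{\Sigma}$ via Lemma \ref{lemma:4.2}, and algebraically reduces; this is exactly the computation the paper carries out in the proof of \emph{Theorem \ref{theorem:4.1}}, not Lemma \ref{lemma:4.3}. In effect you have merged the two results: your argument simultaneously establishes $W^{2}(k)=\widehat{W}^{2}(k)=\sum_{j}\pi_{j}(k)W_{j}^{2}(k)$, whereas the paper proves the second equality first (Lemma \ref{lemma:4.3}, by pure linearity) and then the first (Theorem \ref{theorem:4.1}, by the moment algebra you wrote out). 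Both approaches are valid; the paper's separation keeps Lemma \ref{lemma:4.3} to a one-line identity, while your version front-loads the work but makes Theorem \ref{theorem:4.1} immediate.
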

\begin{proof}
From (\ref{Wassdefn}) and Proposition \ref{Prop:4.1}, we have
\begin{align}
\nonumber &W^{2} = \displaystyle\int_{\mathbb{R}^{n}} \parallel x \parallel_{\ell_{2}\left(\mathbb{R}^{n}\right)}^{2} \varsigma(x) dx\\
&= \displaystyle\int_{\mathbb{R}^{n}} \parallel x \parallel_{\ell_{2}\left(\mathbb{R}^{n}\right)}^{2} \displaystyle\sum_{j=1}^{m} \pi_{j} \varsigma_{j}(x) dx \nonumber \\
& = \displaystyle\sum_{j=1}^{m} \pi_{j} \displaystyle\int_{\mathbb{R}^{n}} \parallel x \parallel_{\ell_{2}\left(\mathbb{R}^{n}\right)}^{2} \varsigma_{j}(x) dx \nonumber = \displaystyle\sum_{j=1}^{m} \pi_{j} W_{j}^{2} \nonumber \\ 
&\Rightarrow W^2(k) = \displaystyle\sum_{j=1}^{m} \pi_{j}(k) W_{j}^{2}(k),\:\forall k\label{eqn:W^2=sum_piW^2} 
\end{align}
\end{proof}

Above lemma shows that the $W$ distance between $\rho(k)$ and $\delta(k)$ is equivalent to the $\pi(k)$ weighted sum of componentwise $W$ between individual Gaussian and $\delta(k)$. Now we further prove that we can avoid this componentwise convex combination expression\eqref{eqn:W^2=sum_piW^2} by a following theorem.
\begin{theorem}\label{theorem:4.1}
At any given time $k$, let the state PDF for stochastic jump linear system $\rho(k)$ be of the form \eqref{dtSJLSstatePDF}. In addition, let the instantaneous mean and covariance of mixture PDF $\rho(x,k)$ be $
\widehat{\mu}$ and $\widehat{\Sigma}$, respectively. If we denote $\widehat{W}(k)\triangleq W\left(\mathcal{N}\left(\widehat{\mu}(k),\widehat{\Sigma}(k)\right),\delta(x)\right)$ and $W(k)\triangleq W\left(\rho(k,x),\delta(x)\right)$, then we have
\begin{equation}
\qquad\widehat{W}(k) = W(k),\quad\forall k
\end{equation}
\end{theorem}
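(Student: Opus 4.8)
The plan is to collapse both sides of the claimed equality to the same closed-form scalar by invoking Proposition~\ref{GaussToDiracCorollary} twice, and then to check the resulting algebraic identity using Lemma~\ref{lemma:4.2}. First I would apply Proposition~\ref{GaussToDiracCorollary} directly to the synthetic Gaussian to obtain
\[
\widehat{W}^{2}(k) = \|\widehat{\mu}(k)\|_{\ell_{2}\left(\mathbb{R}^{n}\right)}^{2} + \text{tr}\left(\widehat{\Sigma}(k)\right).
\]
On the other side, each component $\mathcal{N}_{j}(k,x)$ appearing in \eqref{dtSJLSstatePDF} is Gaussian, say with mean $\mu_{j}(k)$ and covariance $\Sigma_{j}(k)$, so Proposition~\ref{GaussToDiracCorollary} gives $W_{j}^{2}(k) = \|\mu_{j}(k)\|_{\ell_{2}\left(\mathbb{R}^{n}\right)}^{2} + \text{tr}\left(\Sigma_{j}(k)\right)$, and combining this with Lemma~\ref{lemma:4.3} yields
\[
W^{2}(k) = \sum_{j=1}^{m} \pi_{j}(k)\left(\|\mu_{j}(k)\|_{\ell_{2}\left(\mathbb{R}^{n}\right)}^{2} + \text{tr}\left(\Sigma_{j}(k)\right)\right).
\]

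Next I would substitute the expression for $\widehat{\Sigma}(k)$ from Lemma~\ref{lemma:4.2} into $\widehat{W}^{2}(k)$. Using linearity of the trace together with the identity $\text{tr}\left((\mu_{j}-\widehat{\mu})(\mu_{j}-\widehat{\mu})^{\top}\right) = \|\mu_{j}-\widehat{\mu}\|_{\ell_{2}\left(\mathbb{R}^{n}\right)}^{2}$, this gives
\[
\widehat{W}^{2}(k) = \|\widehat{\mu}\|^{2} + \sum_{j=1}^{m}\pi_{j}(k)\,\text{tr}\left(\Sigma_{j}(k)\right) + \sum_{j=1}^{m}\pi_{j}(k)\,\|\mu_{j}(k)-\widehat{\mu}\|^{2}.
\]
Expanding the last sum and using $\widehat{\mu} = \sum_{j}\pi_{j}(k)\mu_{j}(k)$ (again Lemma~\ref{lemma:4.2}) produces the cancellation $\sum_{j}\pi_{j}\|\mu_{j}-\widehat{\mu}\|^{2} = \sum_{j}\pi_{j}\|\mu_{j}\|^{2} - \|\widehat{\mu}\|^{2}$, so the $\|\widehat{\mu}\|^{2}$ terms drop out and $\widehat{W}^{2}(k) = \sum_{j}\pi_{j}(k)\left(\|\mu_{j}(k)\|^{2} + \text{tr}(\Sigma_{j}(k))\right) = W^{2}(k)$. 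Since $W\geq 0$, taking square roots concludes $\widehat{W}(k) = W(k)$ for all $k$.

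The computation is essentially routine; the one point requiring care is the bias--variance-type bookkeeping, i.e. checking that the ``dispersion of the component means about $\widehat{\mu}$'' term injected into $\widehat{\Sigma}$ by Lemma~\ref{lemma:4.2} exactly reconciles $\sum_{j}\pi_{j}\|\mu_{j}\|^{2}$ with $\|\widehat{\mu}\|^{2}=\|\sum_{j}\pi_{j}\mu_{j}\|^{2}$. This is precisely what makes the theorem work: forming the mixture mean loses the spread of the component means, but that spread reappears in $\widehat{\Sigma}$, and because $W^{2}$ against a Dirac reference depends only on $\|\text{mean}\|^{2}+\text{tr}(\text{covariance})$, the two effects cancel. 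I would also note that the argument never uses the specific product structure of \eqref{dtSJLSstatePDF}: it holds for $W$ against a Dirac reference and any finite Gaussian mixture, which is worth stating as a remark.
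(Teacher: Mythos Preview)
Your proof is correct and follows essentially the same route as the paper: apply Proposition~\ref{GaussToDiracCorollary} to the synthetic Gaussian, substitute the mixture mean--covariance formulas from Lemma~\ref{lemma:4.2}, expand and cancel using $\widehat{\mu}=\sum_{j}\pi_{j}\mu_{j}$, and invoke Lemma~\ref{lemma:4.3} to identify the result with $W^{2}(k)$. Your additional remark that the argument uses nothing about the product structure of \eqref{dtSJLSstatePDF} beyond finiteness of the Gaussian mixture is a worthwhile observation not made explicit in the paper.
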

\begin{proof}
From Proposition \ref{GaussToDiracCorollary}, we have
\begin{align}
&\widehat{W}^{2} = \parallel \widehat{\mu}\parallel_{\ell_{2}\left(\mathbb{R}^{n}\right)}^{2} + \text{tr}(\widehat{\Sigma}) \overset{(\ref{MeanCovHat})}{=} \nonumber \\ 
&\widehat{\mu}^{\top}\widehat{\mu} + \:\text{tr}\left(\displaystyle\sum_{j=1}^{m}\pi_j(\Sigma_{j} + (\mu_{j}-\widehat{\mu})(\mu_{j}-\widehat{\mu})^{\top}\right),
\label{dtSJLSThmProof}
\end{align}
Since$\tr(\cdot)$ is linear operator and $\displaystyle\sum_{j=1}^{m}\pi_j = 1$, we can simplify \eqref{dtSJLSThmProof} as 
\begin{align}
&\widehat{W}^{2} = \widehat{\mu}^{\top}\widehat{\mu}
+ \: \displaystyle\sum_{j=1}^{m}\pi_{j}\text{tr}\left(\Sigma_{j}\right) + \: \text{tr}\left(\displaystyle\sum_{j=1}^{m} \pi_{j}\mu_{j}\mu_{j}^{\top}\right) - \nonumber \\ & \text{tr}\left(\left(\displaystyle\sum_{j=1}^{m} \pi_{j}\mu_{j}\right)\widehat{\mu}^{\top}\right) - \text{tr}\left(\widehat{\mu}\left(\displaystyle\sum_{j=1}^{m} \pi_{j}\mu_{j}\right)^{\top}\right) + \nonumber \\ &\text{tr}\left(\widehat{\mu}\widehat{\mu}^{\top}\right).
\label{maineq1}
\end{align}

Now, we recall from (\ref{MeanCovHat}) that $\widehat{\mu} = \displaystyle\sum_{j=1}^{m} \pi_{j} \mu_{j}$, and that $\widehat{\mu}^{\top}\widehat{\mu} = \:\text{tr}\left(\widehat{\mu}^{\top}\widehat{\mu}\right) = \:\text{tr}\left(\widehat{\mu}\widehat{\mu}^{\top}\right)$. 

Consequently, the first, fourth, fifth and sixth term in the right-hand-side of (\ref{maineq1}) cancel out, resulting
\begin{align*}
& \widehat{W}^{2} = \displaystyle\sum_{j=1}^{m}\pi_{j}\text{tr}\left(\Sigma_{j}\right) + \: \displaystyle\sum_{j=1}^{m} \pi_{j} \: \text{tr}\left(\mu_{j}\mu_{j}^{\top}\right) \\ &= \displaystyle\sum_{j=1}^{m}\pi_{j} \left(\parallel \mu_{j} \parallel_{\ell_{2}\left(\mathbb{R}^{n}\right)}^{2} + \: \text{tr}\left(\Sigma_{j}\right)\right) = \displaystyle\sum_{j=1}^{m}\pi_{j} W_{j}^{2}
\end{align*}
Finally, from Lemma \ref{lemma:4.3}, we obtain
\begin{eqnarray*}
\widehat{W}^2(k) = \sum_{j=1}^{m}\pi_j(k)W_j^2(k) = W^2(k)\nonumber\\
\Rightarrow \widehat{W}(k) = W(k),\quad \forall k\qquad
\end{eqnarray*}
\end{proof}

Theorem \ref{theorem:4.1} states that at any time $k$, there always exists a Gaussian PDF $\mathcal{N}\left(\widehat{\mu},\widehat{\Sigma}\right)$ such that the distance between $\rho(x)$ and $\delta(x)$ is equivalent to the distance between $\mathcal{N}(\widehat{\mu},\widehat{\Sigma})$ and $\delta(x)$ as shown in Fig. \ref{IllustrateMOGtheorem}. Further, at each time, such a Gaussian PDF can be constructed (using (\ref{MeanCovHat})) from component Gaussians available in closed form. The practical utility of Theorem \ref{theorem:4.1} stems from the fact that it obviates the need to compute the MoG PDF $\rho(k)$ for performance analysis, which incurs exponential growth in computational complexity, as discussed before. On the contrary, the computational complexity for $\widehat{W}(k)$ remains constant with time, and admits a closed form solution. This can be leveraged via the ``split-and-merge" algorithm illustrated in Fig. \ref{fig:mog}. Starting with an initial Gaussian PDF, linear modal dynamics results in $m$ modal Gaussian PDFs (``split step"). Instead of computing the MoG state PDF, one would then construct a synthetic Gaussian $\mathcal{N}\left(\widehat{\mu},\widehat{\Sigma}\right)$ (``merge step") followed by $\widehat{W}$ computation in closed form, and repeat thereafter. Thus, at any time $k$, we only have $m$ mean vectors and covariance matrices to work with.

\begin{figure}
\begin{center}
\includegraphics[width=0.49\textwidth]{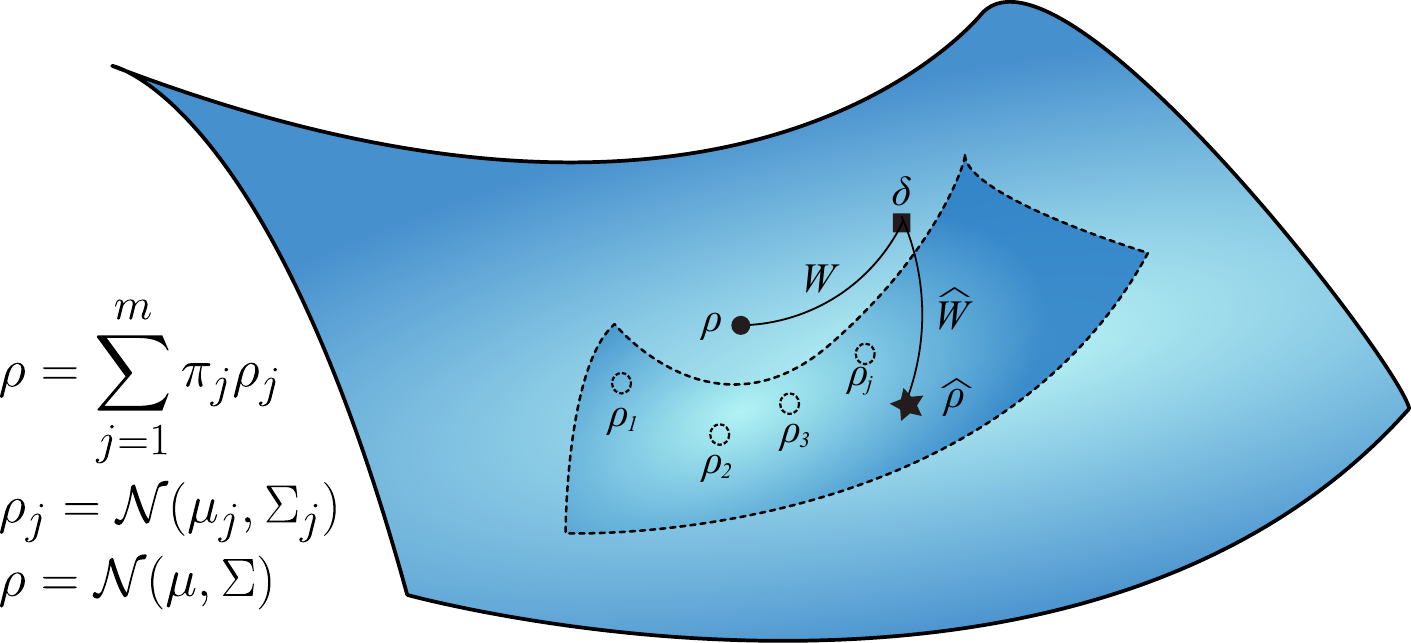}
\caption{Illustration of Theorem \ref{theorem:4.1}, showing that given MoG PDF $\rho$, we can construct Gaussian $\widehat{\rho}$ such that $W = \widehat{W}$, where $W\triangleq W\left(\rho,\delta\right)$, and $\widehat{W}\triangleq W\left(\widehat{\rho},\delta\right)$.}
\label{IllustrateMOGtheorem}
\end{center}
\end{figure}

\section{Academic Example}
This example demonstrates how the proposed method for probabilistic robustness analysis can be used for stochastic jump linear systems. Note that the switching law can be any random process, but we consider that $\pi(k)$ is governed by a Markov process. The discrete-time system dynamics for the Markov jump linear system is given by
\begin{align*}
x(k+1) = A_{\sigma}x(k), \qquad \sigma\in \mathbb{Z}=\{1,2\},
\end{align*}
\begin{align*}
A_{1} = \begin{bmatrix}
		  0.7 & 0 \\
		  0 & 1\\
		  \end{bmatrix},
\qquad
A_{2} = \begin{bmatrix}
		  1 & 0 \\
		  0 & 0.85\\
		  \end{bmatrix},
\end{align*}
where the initial switching probability $\pi(0)$ and the Markov transition probability matrix $P$ are as follows.
\begin{align*}
\pi_{0} = [0.5, 0.5], \quad
P = \begin{bmatrix}
0.75 & 0.25\\
0.2 & 0.8
\end{bmatrix}.
\end{align*}
We assume that the initial condition has uncertainty and it is described as a following Gaussian PDF.
\begin{align*}
\mathcal{N}(\mu_{0},\Sigma_{0}), \text{ with }\mu_{0} = [5,5]^{\top}, \text{ and }\Sigma_{0} = \begin{bmatrix}
		  0.1 & 0 \\
		  0 & 0.1\\
		  \end{bmatrix}.
\end{align*}

Firstly, we compute the propagation of the PDF along each dynamics without switching by the linear recursions that are $\mu(k+1)=A_j\mu(k)$ and $\Sigma(k+1)=A_j\Sigma(k)A_j^{\top}$, $\forall j=1,2$.
The propagation of the state PDF corresponding to each mode without switching is shown in Fig. \ref{fig_sample_propagation}(a). As expected from the definition of the dynamics, the modal PDFs without switching move along a particular axis. In contrast, Fig. \ref{fig_sample_propagation}(b) shows the evolution of the density function under Markov switching, which is computed from  Corollary \ref{corollary:mog_closedform} in MoG form with  $\pi_{0}$ and transition probability matrix $P$. It can be shown that the Markov jump linear system considered here is stable and the PDF will converge to $\delta(x)$. This is supported by the convergence of  $W(k)$ to zero as shown in Fig. \ref{fig_ex1_wasserstein}. In this $W$ plot, each mode dynamics does not converge and shows steady-state error while the Markovian switching dynamics asymptotically converges to zero. For a given uncertain bound of initial state the proposed method confirms the system robustness with respect to both initial state uncertainty and Markovian switching.

Without using techniques introduced in section IV, it is practically impossible to propagate density functions and calculate $W$ even for a Markov jump linear system with two modes. The number of Gaussian components that represents the state PDF after $N$ time steps is $2^{N}$, which soon becomes computationally intractable. For $m$ modes stochastic jump linear system, the growth rate is $m^N$. The proposed method in this paper, however, has maximum $m$ number of Gaussian components regardless of time evolution.

\begin{figure}[!ht] 
\begin{center}
\subfigure[Individual system PDF propagation]{\includegraphics[width=0.4\textwidth]{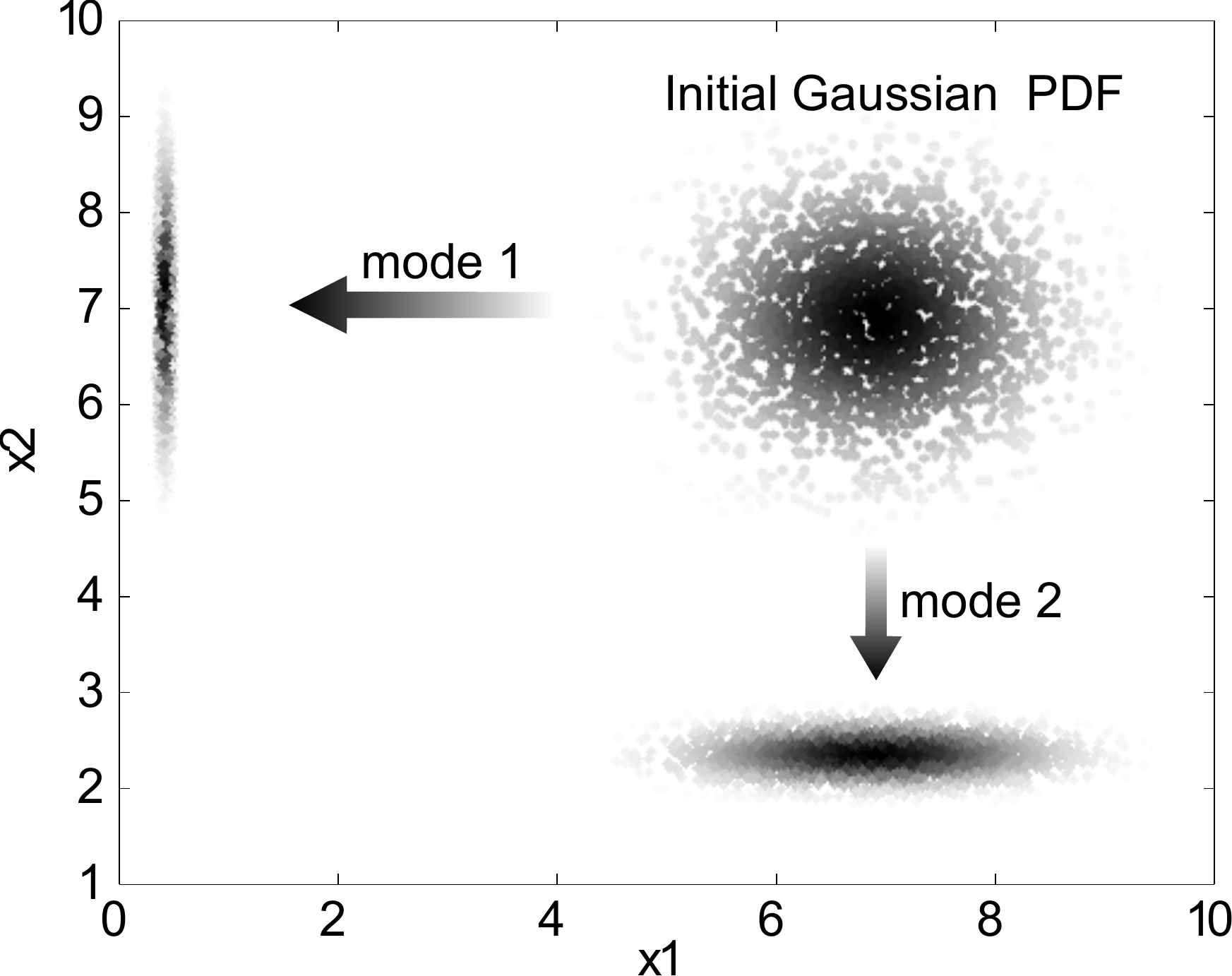}}
\subfigure[PDF propagation under switching]{\includegraphics[width=0.4\textwidth]{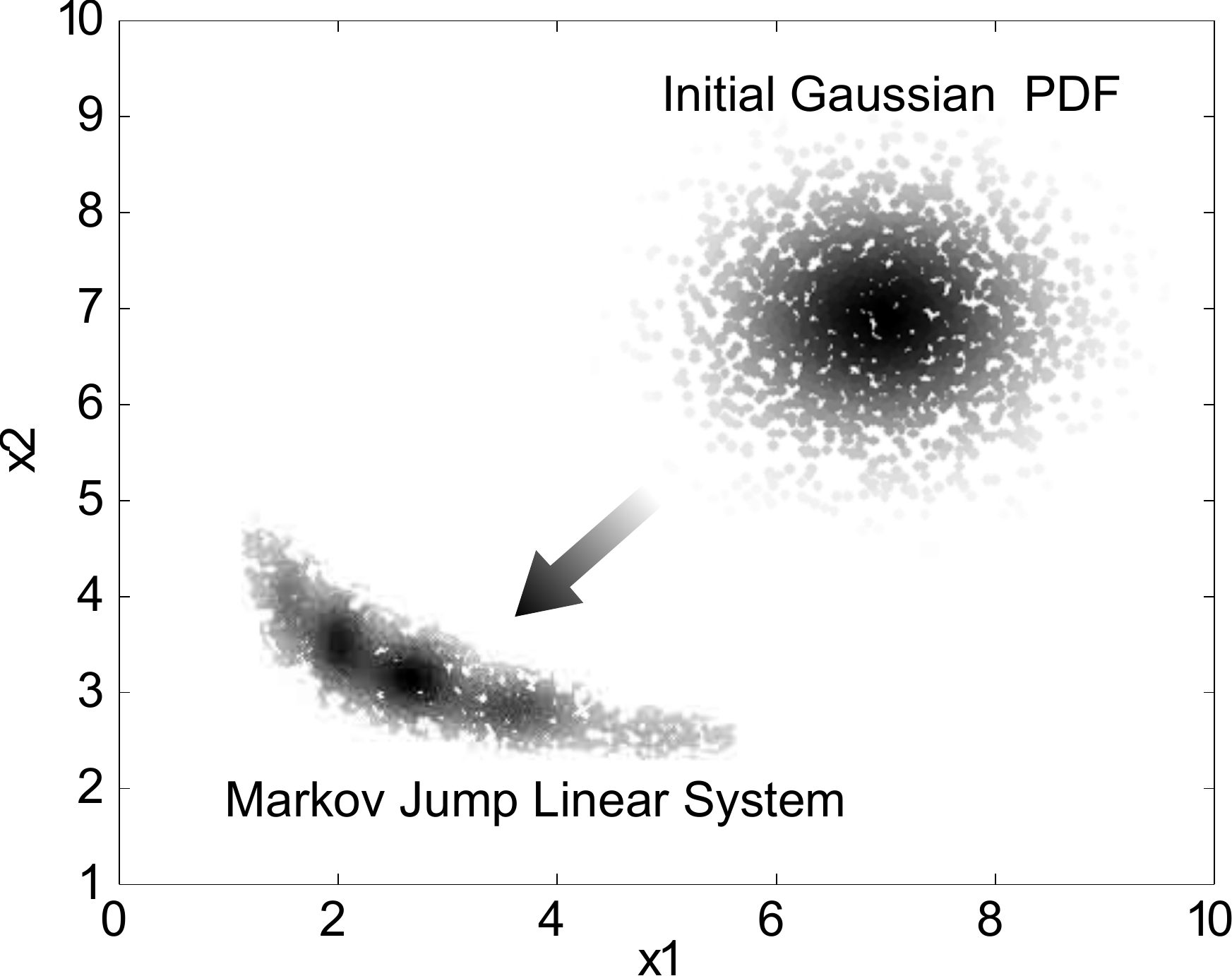}}
\subfigure[Corresponding $W(k)$]{\includegraphics[width=0.4\textwidth]{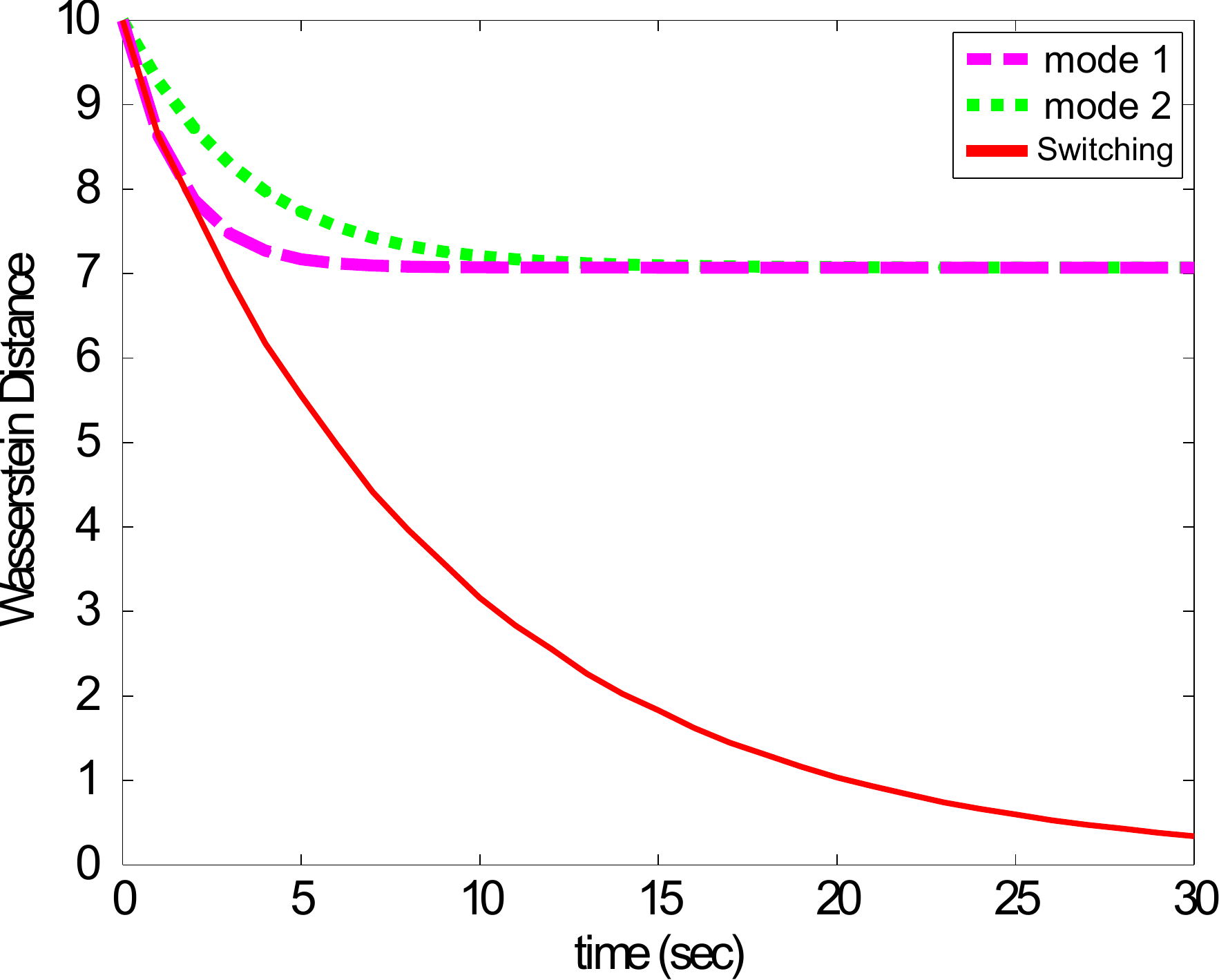}\label{fig_ex1_wasserstein}}
\caption{Simulation result of PDF Propagation and Robustness Analysis for Markov Jump Linear System with given initial Gaussian state PDF}\label{fig_sample_propagation}
\end{center}
\end{figure}

\section{Conclusions}
In this paper, we proposed a new method for robustness analysis of the stochastic jump linear systems with given initial state uncertainties. In general, the system state contains uncertainties which come from measurement noise or sensor inaccuracy. We assumed that theses initial state uncertainties are represented by Gaussian PDF. Starting with Gaussian PDF, the state PDF forms MoG where the number of Gaussian components grows exponentially over time. To cope with the computational complexity caused by the structure in MoG, we proved that there always exists a synthetic Gaussian $\mathcal{N}(\widehat{\mu},\widehat{\Sigma})$ which has an equidistance with state PDF $\rho(x)$ in terms of $W$. In the example, we showed the efficiency of the proposed method.

\bibliographystyle{plain}
\bibliography{Analysis_ACC2014}

\begin{thebibliography}{10}

\bibitem{ali2011convergence}
Simo Ali-L{\"o}ytty.
\newblock On the convergence of the gaussian mixture filter.
\newblock {\em Tampereen teknillinen yliopisto. Matematiikan laitos.
  Tutkimusraportti; 89}, 2011.

\bibitem{cassandras2001optimal}
Christos~G Cassandras, David~L Pepyne, and Yorai Wardi.
\newblock Optimal control of a class of hybrid systems.
\newblock {\em Automatic Control, IEEE Transactions on}, 46(3):398--415, 2001.

\bibitem{do2005discrete}
Oswaldo~Luiz do~Valle~Costa, M~Marcelo~Dutra Fragoso, and Ricardo~Paulino
  Marques.
\newblock {\em Discrete time Markov jump linear systems}.
\newblock Springer, 2005.

\bibitem{feng1992stochastic}
Xiangbo Feng, Kenneth~A Loparo, Yuandong Ji, and Howard~Jay Chizeck.
\newblock Stochastic stability properties of jump linear systems.
\newblock {\em Automatic Control, IEEE Transactions on}, 37(1):38--53, 1992.

\bibitem{grimmett2001probability}
Geoffrey Grimmett and David Stirzaker.
\newblock {\em Probability and random processes}.
\newblock Oxford university press, 2001.

\bibitem{hassani2013mathematical}
Sadri Hassani.
\newblock Mathematical physics: a modern introduction to its foundations.
\newblock 2013.

\bibitem{liberzon2003switching}
Daniel Liberzon.
\newblock {\em Switching in systems and control}.
\newblock Springer, 2003.

\bibitem{lin2009stability}
Hai Lin and Panos~J Antsaklis.
\newblock Stability and stabilizability of switched linear systems: a survey of
  recent results.
\newblock {\em Automatic control, IEEE Transactions on}, 54(2):308--322, 2009.

\bibitem{minto1991new}
KD~Minto and R~Ravi.
\newblock New results on the multi-controller scheme for the reliable control
  of linear plants.
\newblock In {\em American Control Conference, 1991}, pages 615--619. IEEE,
  1991.

\bibitem{villani2003topics}
C{\'e}dric Villani.
\newblock {\em Topics in optimal transportation}, volume~58.
\newblock AMS Bookstore, 2003.

\bibitem{xiong2005robust}
Junlin Xiong, James Lam, Huijun Gao, and Daniel~WC Ho.
\newblock On robust stabilization of markovian jump systems with uncertain
  switching probabilities.
\newblock {\em Automatica}, 41(5):897--903, 2005.

\bibitem{xu2004optimal}
Xuping Xu and Panos~J Antsaklis.
\newblock Optimal control of switched systems based on parameterization of the
  switching instants.
\newblock {\em Automatic Control, IEEE Transactions on}, 49(1):2--16, 2004.

\bibitem{you2011minimum}
Keyou You and Lihua Xie.
\newblock Minimum data rate for mean square stabilizability of linear systems
  with markovian packet losses.
\newblock {\em Automatic Control, IEEE Transactions on}, 56(4):772--785, 2011.

\bibitem{zhang2008analysis}
Lixian Zhang, E-K Boukas, and James Lam.
\newblock Analysis and synthesis of markov jump linear systems with
  time-varying delays and partially known transition probabilities.
\newblock {\em Automatic Control, IEEE Transactions on}, 53(10):2458--2464,
  2008.

\bibitem{zhang2010necessary}
Lixian Zhang and James Lam.
\newblock Necessary and sufficient conditions for analysis and synthesis of
  markov jump linear systems with incomplete transition descriptions.
\newblock {\em Automatic Control, IEEE Transactions on}, 55(7):1695--1701,
  2010.

\end{thebibliography}
\end{document}